\documentclass[9pt,technote]{IEEEtran}
\usepackage{pbalance}
\usepackage{graphicx}  
\usepackage{epstopdf}   
\usepackage{amsmath,amssymb,amsfonts}
\usepackage{algorithmic}
\usepackage{algorithm}
\usepackage{array}
\usepackage[caption=false,font=normalsize,labelfont=sf,textfont=sf]{subfig}
\usepackage{textcomp}
\usepackage{stfloats}
\usepackage{url}
\usepackage{verbatim}
\usepackage{cite}
\usepackage{subfig}
\usepackage[thmmarks,amsmath]{ntheorem}
\theoremseparator{:}

\newtheorem{lemma}{Lemma}
\newtheorem{proposition}{Proposition}
\theoremheaderfont{\it}
\theorembodyfont{\normalfont}
\theoremsymbol{\ensuremath{\Box}}
\newtheorem*{proof}{Proof}
\theoremheaderfont{\it}
\theorembodyfont{\normalfont}
\theoremsymbol{}
\newtheorem*{remark}{Remark}
\begin{document}
	
	\title{Mission Efficiency Optimization in Low-Altitude Economy: Adaptive Power Allocation for Coordinating Heterogeneous Aircraft Swarms}

	\author{Jiarui Zhang,~\IEEEmembership{Student Member,~IEEE}, Wei Feng,~\IEEEmembership{Senior Member,~IEEE}, Chao Dong,~\IEEEmembership{Senior Member,~IEEE}, Ning~Ge,~\IEEEmembership{Member,~IEEE}, and Qihui Wu,~\IEEEmembership{Fellow,~IEEE}
		
		\thanks{J. Zhang, W. Feng, and N. Ge are with the Department of Electronic Engineering, State Key Laboratory of Space Network and Communications, Tsinghua University, Beijing 100084, China (e-mail: zhangjr22@mails.tsinghua.edu.cn, fengwei@tsinghua.edu.cn, gening@tsinghua.edu.cn).}	
		\thanks{C. Dong and Q. Wu are with the Key
			Laboratory of Dynamic Cognitive System of Electromagnetic Spectrum Space,
			Nanjing University of Aeronautics and Astronautics, Nanjing 211106, China (e-mail: dch@nuaa.edu.cn, wuqihui@nuaa.edu.cn).}
	}

	\maketitle
	
	\begin{abstract}
		With the rapid development of the low-altitude economy, low-altitude operations are booming, where complex missions require collaborative efforts among multiple heterogeneous low-altitude aircrafts (LAAs). Specifically, different LAAs assume distinct roles: some for sensing, some for communication, some for computing, and others for mission execution, together forming a sensing-communication-computing-control ($\textbf{SC}^3$) closed loop, akin to a \emph{reflex arc}. To enable efficient coordination in such multi-LAA swarms, we introduce the concept of operational-capability entropy (OCE) to quantify the effective work capability of operational LAAs. Accordingly, by jointly considering heterogeneous OCE and channel conditions among LAAs, we formulate the power allocation problem with the goal of minimizing the linear quadratic regulator (LQR) cost, which serves as a metric for mission efficiency. The resulting complex optimization problem is decomposed into two convex subproblems that are solved iteratively, with closed-form solutions derived for each. Simulation results demonstrate that the proposed mission-oriented adaptive power allocation scheme significantly outperforms traditional ones.
	\end{abstract}
	
	\begin{IEEEkeywords}
		Low-altitude aircraft (LAA), operational-capability entropy (OCE), power allocation, sensing-communication-computing-control ($\textbf{SC}^3$) closed loop.
	\end{IEEEkeywords}
	
	\section{Introduction}
	The low-altitude economy is growing rapidly, relying on diverse low-altitude aircrafts (LAAs)~\cite{LAE1,LAE2,LAE3}. Beyond logistics, inspection, and tourism, LAAs are increasingly deployed in low-altitude operations to enable rapid response and replace humans in hazardous or inaccessible environments~\cite{LAE1,LAE2}. Consider high-rise fire scenarios, where ground equipment often fails to reach the fire source and urgent rescue operations must be conducted at low altitudes by LAAs. Given the limited capability of a single LAA, multi-LAA coordination becomes essential, in which multiple LAAs equipped with sensing, communication, and computing payloads collaborate to accomplish complex missions. For example, in the high-rise fire scenario, sensing LAAs equipped with thermal imagers and cameras collect environmental data and upload it to computing LAAs, which process the data to generate control commands and transmit them to operational LAAs for firefighting execution. This entire process forms a sensing-communication-computing-control ($\textbf{SC}^3$) closed loop, akin to a \emph{reflex arc} \cite{EIH1,EIH2,Lei,Fang}. To collect environmental information and coordinate the swarm, the computing LAAs are also equipped with communication modules to bridge the sensing LAAs and the operational LAAs, thereby serving as edge information hubs (EIHs)~\cite{EIH1,EIH2,Lei}. Acting as a novel network element in unmanned operations, the EIH functions much like the nerve center in a \emph{reflex arc}.
	
	\IEEEpubidadjcol
	
	In low-altitude operations, whether control commands can be successfully delivered and executed directly determines the overall performance and mission efficiency~\cite{control,system}. Therefore, the downlink wireless links that deliver control commands from the EIH to multiple operational LAAs are critical to the entire $\textbf{SC}^3$ closed loop. However, the communication resources of the EIH are constrained by its limited payload~\cite{EIH1,EIH2}. Moreover, the operational LAAs exhibit significant heterogeneity in both their operational capabilities and channel conditions. Coupled with the strict power budget, such heterogeneity poses great challenges for efficient resource allocation across the swarm.
	
	Existing studies on downlink resource allocation have mainly focused on communication performance. For instance, the work in\cite{Hu} proposed a deep reinforcement learning (DRL)-based algorithm for multi-cell power allocation to maximize the sum rate. However, such communication-oriented designs focus solely on channel-level metrics, while neglecting the tight coupling among the sensing, communication, computing, and control stages of a mission. In the field of $\textbf{SC}^3$ closed loop control, mission-oriented resource allocation has recently attracted attention. Lei \textit{et al.} designed a control-oriented power allocation scheme for multiple $\textbf{SC}^3$ loops in integrated satellite-UAV networks\cite{Lei}, while Fang \textit{et al.} achieved intra-loop and inter-loop task balance through the joint optimization of bandwidth, time, and CPU frequency\cite{Fang}. Nevertheless, these studies model multi-actuator missions as multiple independent closed loops and thus fail to capture the collaborative nature of coordinated unmanned operations. Moreover, they implicitly assume that actuators possess unlimited operational capabilities. In practical low-altitude operations, however, each LAA has limited onboard processing, caching, and command decoding capabilities; commands delivered beyond these limits cannot be processed in time, which not only wastes communication resources but also introduces control delay and may even destabilize the closed loop system\cite{delay,stable}.

	Motivated by the above issues, we investigate multi-LAA coordinated operations from the perspective of $\textbf{SC}^3$ closed loop control. The main contributions are summarized as follows. First, we introduce the novel concept of operational-capability entropy (OCE) to quantify the effective work capability of an LAA during mission execution, which captures the physical limit on the amount of control information that each LAA can receive, decode, and utilize within one $\textbf{SC}^3$ cycle. Second, by jointly considering the heterogeneous OCE and channel conditions among LAAs, we formulate a mission-oriented adaptive power allocation problem that minimizes the linear quadratic regulator (LQR) cost, thereby directly optimizing mission efficiency rather than communication metrics alone. Third, the resulting non-convex problem is judiciously transformed into a max-min form and decoupled into two convex subproblems, for which closed-form solutions are derived and an iterative algorithm with guaranteed convergence is developed. Simulation results demonstrate that the proposed scheme effectively exploits the limited power budget by balancing OCE and channel heterogeneity, significantly outperforming traditional approaches and validating the critical role of OCE in closed loop control performance.
	
	\section{System Model and Problem Formulation}
	Fig.~\ref{fig:system} illustrates a multi-LAA cooperative system built upon the $\textbf{SC}^3$ closed loop control, where $K$ operational LAAs collaborate to accomplish the same mission. Assisted by the remote control center and cloud server via satellite backhaul, the EIH orchestrates the whole closed loop and, in particular, delivers control commands to the operational LAAs. In this letter, we focus on the downlink command-delivery links from the EIH to the $K$ operational LAAs over orthogonal channels. Denoting by $p_k$ the transmit power allocated to LAA $k$, the total power budget satisfies $\sum_{k = 1}^{K} p_k \leq P_{\text{max}}$, where $P_{\text{max}}$ is the maximum transmit power of the EIH.
	
	\begin{figure} [t]
		\centering
		\includegraphics[width=1.0\linewidth]{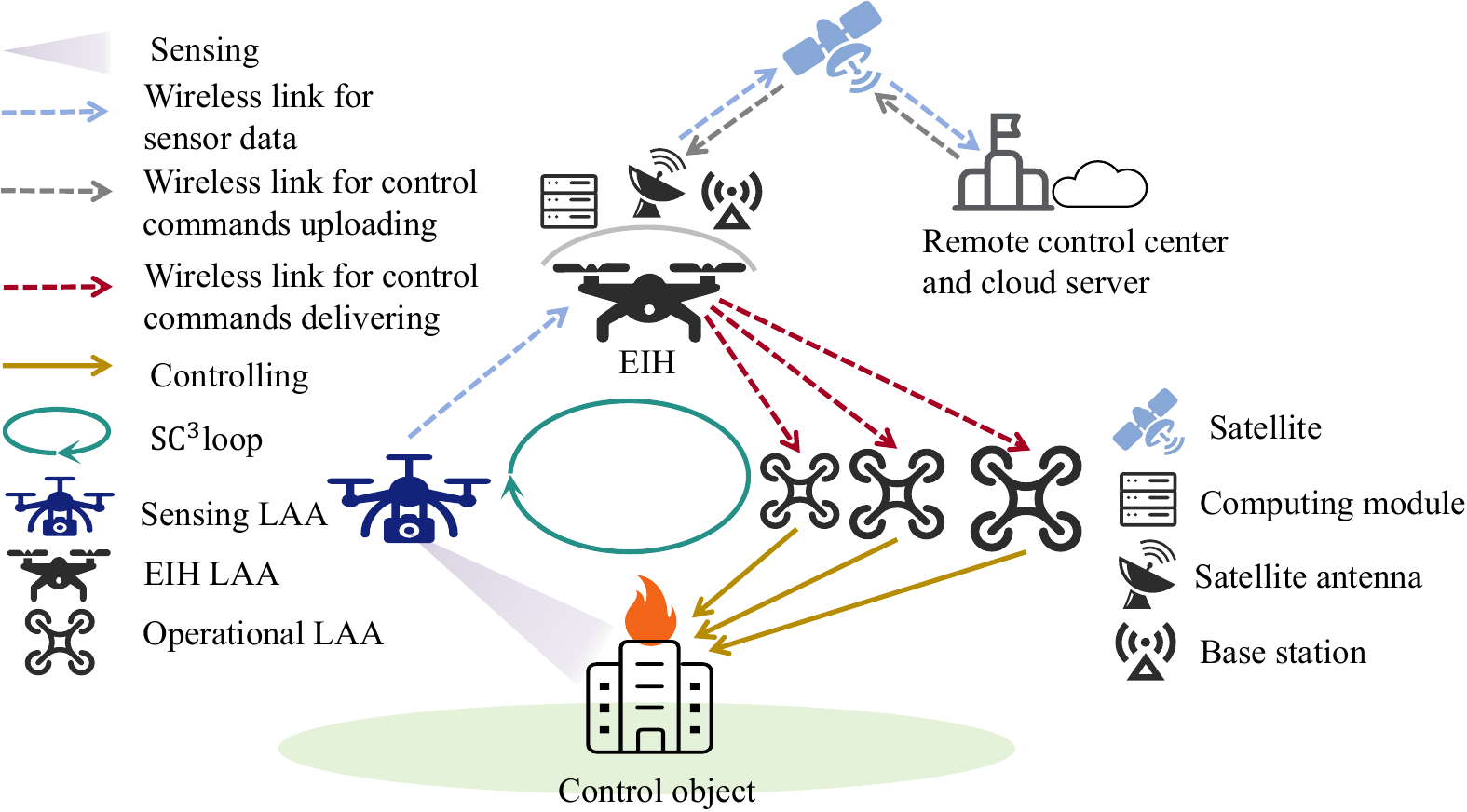}
		\caption{Illustration of a low-altitude emergency rescue scenario, where heterogeneous aircraft are dispatched for sensing, edge information hub, and coordinated unmanned operation. They together form a \emph{reflex arc}-like $\textbf{SC}^3$ closed loop system.}
		\label{fig:system}
	\end{figure}
	
	Assuming both the operational LAAs and the to-user communication module on the EIH are equipped with a single antenna, the ergodic capability of the EIH- $k$th LAA channel, denoted by $C_k$, is given by 
	\begin{equation}\label{xindao}
		C_k = \mathbb{E} \left[ \log_2 \left( 1 + \frac{p_k |h_k|^2}{\sigma^2} \right) \right],
	\end{equation}
	where $\mathbb{E}\left[\cdot\right]$ denotes mathematical expectation and $\sigma^2$ is the noise variance. Additionally, $h_k$, the channel gain between the EIH and LAA $k$, is modeled as
	\begin{equation}\label{zengyi}
		h_k=s_k \cdot l_k,
	\end{equation}
	where $s_k \overset{i.i.d}{\sim} \mathcal{CN}(0,1)$ denotes the fast-varying small-scale channel gain, i.e., Rayleigh fading. Meanwhile, $l_k$ represents the large-scale channel gain, and is expressed as
	\begin{equation}\label{dachidu}
		l_k = \sqrt{(d_k)^{-\gamma} F_k},
	\end{equation}
	where $d_k$ is the distance between the EIH and LAA $k$, $\gamma$ is the path loss exponent, and $F_k$ represents shadow fading. From the work in \cite{Feng}, we can get a closed-form approximation for $C_k$ as
	\begin{align}\label{jinsi}
		C_{k,ap} = \min_{w_k \geq 0} \left\{ \log_2 \left( 1 + \frac{p_k l_k^2}{e^{w_k} \sigma^2} \right) + (w_k + e^{-w_k} - 1) \log_2 e \right\}
	\end{align}
	where $w_k$ is the auxiliary variable. Let $R$ be the mission-related data rate of the whole $\textbf{SC}^3$ loop per cycle, and $R_k$ be the rate of LAA $k$. We have $R\leq\sum_{k = 1}^{K} R_k$ and $R_k\leq B_kTC_{k,ap}$, where $T$ is the available transmission time for control commands within one $\textbf{SC}^3$ loop, and $B_k$ is the channel bandwidth.
	
	\begin{figure} [t]
		\centering
		\includegraphics[width=1.0\linewidth]{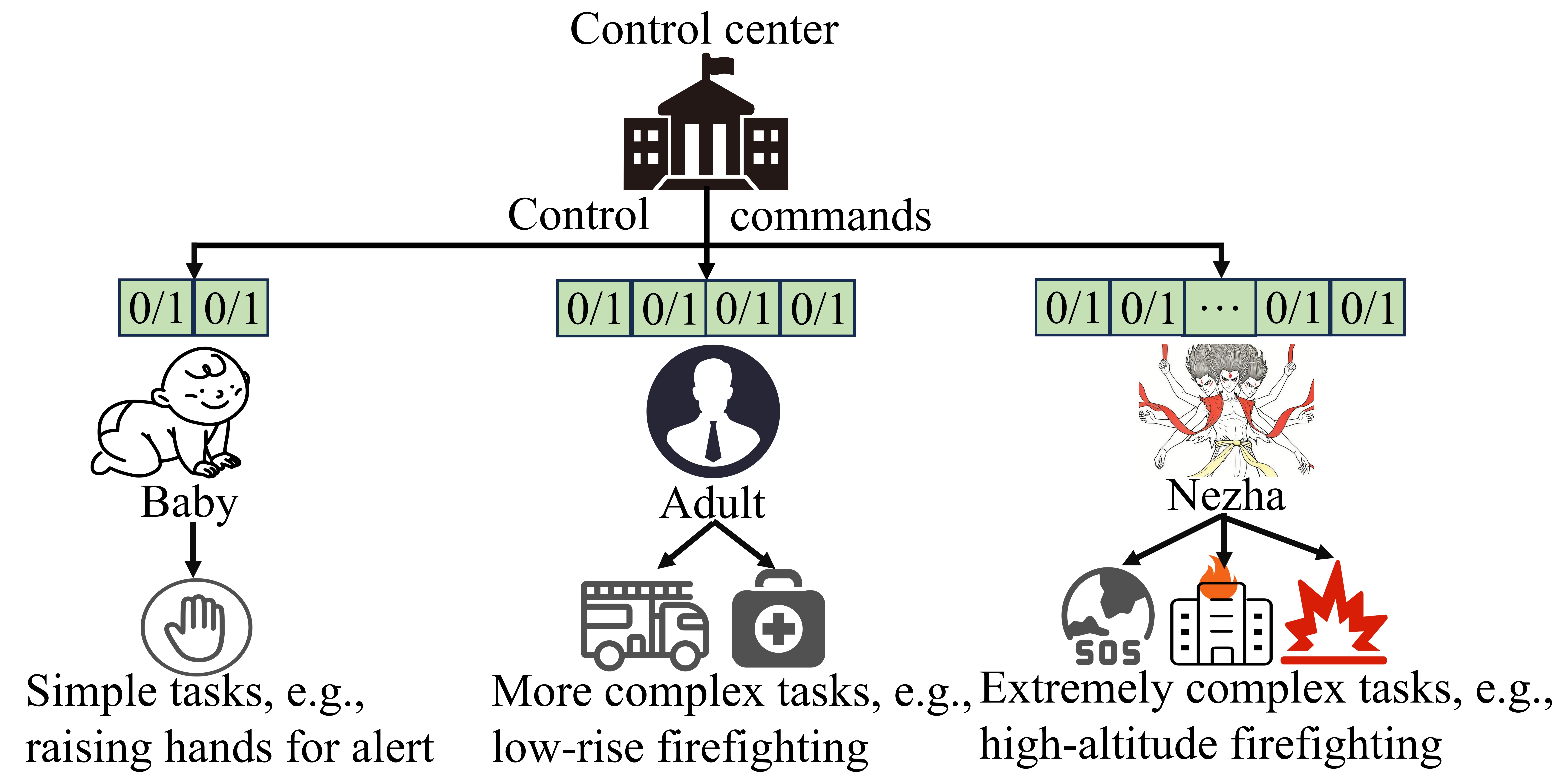}
		\caption{Illustration of OCE constraints. The control center issues commands to three agents with increasing capabilities, reflecting different receivable instruction count and executable mission complexity in firefighting scenario.}
		\label{fig:OCE}
	\end{figure}
	
	To quantify the effective work capability of operational LAAs, we propose the concept of operational-capability entropy (OCE). For the operational LAA $k$, its OCE $E_k$ is defined as the upper bound on the amount of control commands that can be effectively executed within one $\textbf{SC}^3$ cycle, measured in bits /$\textbf{SC}^3$ cycle. Here, ``effectively executed" means that the corresponding control commands can be received, decoded, and converted into actuation actions within the current control cycle. This upper bound arises from the inherent hardware limitations of operational LAAs. An LAA with a large OCE is assigned more control commands to undertake more numerous and complex tasks, whereas an LAA with a small OCE is assigned fewer commands to perform fewer and simpler tasks. Accordingly, the OCE constraint is given by $R_k \leq E_k$. Transmitted commands exceeding $E_k$ cannot be processed in time, resulting in inefficient use of communication and computing resources.
	
	Fig \ref{fig:OCE} gives an intuitive illustration of OCE. The control center issues control commands to three agents with distinctly different capabilities—a baby, an adult, and Nezha (a legendary Chinese mythological character, often depicted with three heads and six arms). These three agents differ significantly in both the number of instructions they can receive and the complexity of the missions they can execute. Even if numerous tasks are assigned to the baby, it cannot effectively execute them. This clearly shows that the amount of control information an agent can effectively utilize is bounded by its own OCE, rather than by the downlink wireless channel alone. So the OCE constraints cannot be neglected.
	
	For the control part of the $\textbf{SC}^3$ closed loop, the process of multiple operational LAAs executing a cooperative mission is modeled as a discrete linear time-invariant system\cite{control,system}:
	\begin{equation}\label{system}
		\mathbf{x}_{i+1} = \mathbf{A}\mathbf{x}_{i}+\mathbf{B}\mathbf{u}_{i}+\mathbf{v}_{i},
	\end{equation}
	where $i$ is the time index, $\mathbf{x}_{i} \!\in\! \mathbb{R}^{n\times 1}$ is the system state, $\mathbf{u}_{i} \!\in\! \mathbb{R}^{m\times 1}$ is the control action (the total action is the sum of individual operational LAA actions due to orthogonal sub-channels, $\mathbf{u}_{i}=\sum_{k=1}^{K} \mathbf{u}_{i,k} $), $\mathbf{v}_{i} \!\in\! \mathbb{R}^{n\times 1}$ is system noise. $\mathbf{A}\!\in\! \mathbb{R}^{n\times n}$ and $\mathbf{B}\!\in\! \mathbb{R}^{n\times m}$ are the state matrix and input matrix respectively.
	
	In control theory, the control performance can be measured by the LQR cost function~\cite{control}, which is a weighted summation of the state derivation and control input,
	\begin{equation}\label{LQR}
		l \triangleq \sup\lim\limits_{N\rightarrow \infty}\mathbb{E} \left[ \frac{1}{N} \sum_{i = 1}^{N} \left(\mathbf{x}_{i}^\text{T}\mathbf{Q}\mathbf{x}_{i} +\mathbf{u}_{i}^\text{T}\mathbf{R}\mathbf{u}_{i}\right) \right],
	\end{equation}
	where $\mathbf{Q}$ and $\mathbf{R}$ are semi-positive definite weight matrices. According to\cite{control}, the LQR cost has a lower bound as:
	\begin{equation}\label{xiajie}
		l \geq \frac{n N(\mathbf{v}) |\det \mathbf{M}|^{\frac{1}{n}}}{2^{\frac{2}{n}(R - \log_2 |\det \mathbf{A}|)} - 1} + \operatorname{tr}(\mathbf{\Sigma}_{\mathbf{v}} \mathbf{S}),
	\end{equation}
	where $N\!\left( \mathbf{v}\right) \!\triangleq\! \frac{1}{2\pi e}\exp \left( \frac{2}{n} h\!\left(  \mathbf{v} \right)  \right) $, $h\!\left(  \mathbf{v} \right)$ is the differential entropy of $\mathbf{v}$, $\mathbf{\Sigma}_{\mathbf{v}}$ is its covariance matrix, $\log_2 |\det \mathbf{A}|$ is the intrinsic entropy rate. $\mathbf{M}$ and $\mathbf{S}$ are solved by the following Riccati equations:
	\begin{equation}\label{fangcheng}
		\mathbf{S}=\mathbf{Q}+\mathbf{A}^\text{T}(\mathbf{S}-\mathbf{M})\mathbf{A},\mathbf{M}=\mathbf{S}^\text{T}\mathbf{B}(\mathbf{R}+\mathbf{B}^\text{T}\textbf{S}\mathbf{B})^{-1}\mathbf{B}^\text{T}\mathbf{S}.
	\end{equation}
	To ensure the system can be stabilized, the total mission-related data rate in each cycle needs to exceed the intrinsic entropy rate\cite{system},
	\begin{equation}\label{neibing}
		R>\log_2 |\det \mathbf{A}|,
	\end{equation}
	otherwise, the LQR cost is infinite.
	
	The formulated mission-oriented adaptive power allocation problem (P1) is as follows:
	\begin{subequations}\label{P1}
		\begin{align}
			\text{(P1)}\quad&\min_{\mathbf{p},\mathbf{w}}\quad l \label{P1a} \\ 
			\textit{s.t.}\quad &l \geq \frac{n N(\mathbf{v}) |\det \mathbf{M}|^{\frac{1}{n}}}{2^{\frac{2}{n}(R - \log_2 |\det \mathbf{A}|)} - 1} + \operatorname{tr}(\mathbf{\Sigma}_{\mathbf{v}} \mathbf{S}) \label{P1b}\\
			&R>\log_2 |\det \mathbf{A}| \label{P1c}\\
			&R\leq\sum_{k = 1}^{K} R_k \label{P1d}\\
			&\sum_{k = 1}^{K} p_k \leq P_{\text{max}} \label{P1e}\\
			&R_k\leq B_kTC_{k,ap}, \quad k = 1, 2, \cdots, K \label{P1f}\\
			&R_k\leq E_k, \quad k = 1, 2, \cdots, K \label{P1g}\\
			&p_k\geq 0, \quad  k = 1, 2, \cdots, K \label{P1h}\\
			&w_k\geq 0, \quad  k = 1, 2, \cdots, K \label{P1i}
		\end{align}
	\end{subequations}
	where $\mathbf{p}=\left[ p_1, p_2, \cdots, p_K \right]$ is power allocation vector and $\mathbf{w}=\left[ w_1, w_2, \cdots, w_K \right]$ is an auxiliary vector. Constraints (\ref{P1f}) and (\ref{P1g}) jointly incorporate Shannon capability and OCE. The problem (P1) is non-convex and needs to optimize two vectors, which requires further transformation for solution.
	
	\section{Mission-Oriented Adaptive Power Allocation}
	Given that the right-hand side of (\ref{P1b}) is a decreasing function of $R$, minimizing the LQR cost is equivalent to maximizing $R$. In addition, to maximize the mission-related data rate, each subchannel must achieve its Shannon capability, therefore, equalities are required in (\ref{P1d}) and (\ref{P1f}). According to (\ref{jinsi}), problem (P1) can be transformed into the following max-min problem (P2):
	
	\begin{subequations}\label{P2}
		\begin{align}
			\text{(P2)}\quad&\max_{\mathbf{p}}\min_{\mathbf{w}} \sum_{k=1}^{K}R_k(w_k,p_k) \label{P2a} \\
			\textit{s.t.}\quad& R_k(w_k,p_k)\leq E_k, \quad k = 1, 2, \cdots, K \label{P2b} \\
			&(\ref{P1e}),(\ref{P1h}),(\ref{P1i}), \notag
		\end{align}
	\end{subequations}
	where
	\begin{align}\label{Rk}
		R_k(w_k,p_k)\triangleq\frac{B_kT}{\ln2}\left\{ \ln \left( 1 + \frac{p_k l_k^2}{e^{w_k} \sigma^2} \right) +(w_k + e^{-w_k} - 1) \right\}
	\end{align}
	We omit the stability condition (\ref{P1c}) and test it after solving (P2). If the optimal $R^*$ satisfies the stability condition (\ref{neibing}), the LQR cost can be calculated by (\ref{xiajie}). Otherwise, the system cannot be stabilized and the LQR cost is infinite.

	\begin{lemma}\label{yinli}
		In the feasible region, the variable $w_k$ must be finite and cannot take infinite values.
	\end{lemma}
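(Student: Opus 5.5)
The plan is to show that, for fixed $p_k\ge 0$, the inner minimization over $w_k\ge 0$ in (P2) is attained at a finite point. The key observation is that $R_k(w_k,p_k)$ is coercive in $w_k$: the logarithmic term is bounded while the penalty term grows linearly. Hence any minimizing $w_k$, or any $w_k$ relevant to a feasible point, cannot escape to infinity.

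\textbf{Step 1: bound the objective.} Fix $p_k\ge 0$ and write
$$f_k(w)=\ln\!\left(1+\frac{p_k l_k^2}{e^{w}\sigma^2}\right)+w+e^{-w}-1,$$
so that $R_k=\frac{B_kT}{\ln 2}f_k(w_k)$. For all $w\ge 0$ we have
$$0\le \ln\!\left(1+\frac{p_k l_k^2}{e^{w}\sigma^2}\right)\le \ln\!\left(1+\frac{p_k l_k^2}{\sigma^2}\right)$$
and $0\le e^{-w}\le 1$. It follows that $f_k(w)\ge w-1$, so $f_k(w)\to\infty$ as $w\to\infty$. At $w=0$, by contrast, $f_k(0)=\ln(1+p_kl_k^2/\sigma^2)$ is finite. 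Therefore every $w$ with $w>1+\ln(1+p_kl_k^2/\sigma^2)$ gives $f_k(w)>f_k(0)$ and cannot be a minimizer.

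\textbf{Step 2: restrict to a compact interval.} The minimization can be confined to the compact interval $[0,\,1+\ln(1+P_{\max}l_k^2/\sigma^2)]$. This bound is uniform over all $p_k\le P_{\max}$ by (\ref{P1e}). Since $f_k$ is continuous, the minimum is attained there, so the optimal $w_k^*$ is finite.

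\textbf{Step 3: the OCE constraint gives a second bound.} Constraint (\ref{P2b}) independently enforces finiteness. From $R_k\le E_k$ and $f_k(w_k)\ge w_k-1$ we get
$$w_k\le 1+\frac{E_k\ln 2}{B_kT}<\infty.$$
So every feasible $w_k$ is bounded, and $w_k=\infty$ would violate (\ref{P2b}) because $E_k$ is finite.

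The argument is elementary and I do not expect a serious obstacle. The one point needing care is interpretation: "cannot take infinite values" should be read either as "no feasible point has $w_k\to\infty$" or as "the inner minimizer is finite." I plan to present Step 3 as the main argument, since it directly uses the feasible-region wording of the lemma, and to cite Steps 1 and 2 as confirmation that the inner minimum is attained, which later derivations of closed-form solutions will need.
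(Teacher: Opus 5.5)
Your proposal is correct, and Step 3 is the paper's own argument: the paper reasons that letting $w_k\to\infty$ would drive the left-hand side of (\ref{P2b}) to infinity while $E_k$ stays finite, and your inequality $f_k(w_k)\ge w_k-1$ makes this explicit as $w_k\le 1+E_k\ln 2/(B_kT)$ (this bound holds for any $p_k\ge 0$, so it does not even need the finiteness of $p_k$ that the paper invokes). Steps 1--2, which show the inner minimum is attained, are correct but not part of this lemma in the paper, which instead obtains the finite minimizer from the closed form (\ref{zuiyoujie2}).
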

	\begin{proof}
		From (\ref{P1e}) and (\ref{P1h}), it is easy to know $p_k$ is finite. For a fixed $p_k$, assuming $w_k$ could be infinite, the left-hand side of (\ref{P2b}) would tend to infinity while the right-hand side remains finite, causing a contradiction. Thus, within the feasible region, $w_k$ must be finite.
	\end{proof}
	
	\begin{proposition}
		Problem (P2), which is strongly concave in $\mathbf{p}$ and strongly convex in  $\mathbf{w}$, can be decomposed into two convex optimization subproblems, (P3) and (P4), and solved using an iterative algorithm.
	\end{proposition}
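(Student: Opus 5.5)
The plan is to verify the curvature claims directly from the explicit form of $R_k(w_k,p_k)$ in (\ref{Rk}). Then I will split (P2) into a $\mathbf{w}$-subproblem and a $\mathbf{p}$-subproblem, and invoke Lemma~\ref{yinli} to guarantee that the inner minimum is attained at a finite point.

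\emph{Step 1 (curvature).} Write $a_k = p_k l_k^2/\sigma^2$. For fixed $p_k>0$, set $f(w)=\ln(1+a_k e^{-w})+w+e^{-w}-1$. Direct differentiation gives
\begin{equation*}
f'(w)=1-e^{-w}-\frac{a_k e^{-w}}{1+a_k e^{-w}}=1-e^{-w}-\frac{a_k}{e^{w}+a_k},
\end{equation*}
and
\begin{equation*}
f''(w)=e^{-w}+\frac{a_k e^{w}}{(e^{w}+a_k)^2}>0 .
\end{equation*}
On any bounded interval $[0,\bar w]$, $f''$ is bounded below by a positive constant, so $f$ is strongly convex there. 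By Lemma~\ref{yinli} the relevant $w_k$ do lie in such an interval. Since the objective is separable across $k$, $\sum_k R_k$ is strongly convex in $\mathbf{w}$. For fixed $w_k$, the map $p_k\mapsto \ln(1+p_k l_k^2/(e^{w_k}\sigma^2))$ has second derivative $-c^2/(1+cp_k)^2<0$ with $c=l_k^2/(e^{w_k}\sigma^2)$. This is bounded away from zero on the bounded set $0\le p_k\le P_{\max}$, so the objective is strongly concave in $\mathbf{p}$.

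\emph{Step 2 (decomposition).} Define (P3) as follows: for fixed $\mathbf{p}$, minimize $\sum_k R_k$ over $\mathbf{w}\ge 0$. This problem decouples across $k$. Setting $f'(w)=0$ gives $e^{-w}(e^{w}+a_k)\cdot$(algebra) $\Rightarrow (e^{w}-1)(e^{w}+a_k)=a_k e^{w}$, i.e.\ $e^{2w}-e^{w}-a_k=0$. Hence
\begin{equation*}
w_k^*=\ln\frac{1+\sqrt{1+4a_k}}{2},
\end{equation*}
which is nonnegative, so the constraint (\ref{P1i}) is inactive.

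Define (P4) as follows: for fixed $\mathbf{w}$, maximize $\sum_k R_k$ over $\mathbf{p}$ subject to (\ref{P2b}), (\ref{P1e}) and (\ref{P1h}). Because $R_k$ is increasing in $p_k$, the OCE constraint (\ref{P2b}) becomes an upper bound $p_k\le \bar p_k$, obtained by solving $R_k=E_k$ for $p_k$ in closed form. (P4) is then a concave maximization over a box intersected with a simplex. Its KKT conditions yield a water-filling solution with caps,
\begin{equation*}
p_k=\min\Big\{\bar p_k,\Big[\tfrac{B_kT}{\mu\ln2}-\tfrac{e^{w_k}\sigma^2}{l_k^2}\Big]^+\Big\},
\end{equation*}
where $\mu$ is found by bisection on $\sum_k p_k=P_{\max}$.

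\emph{Step 3 (iteration and convergence).} I will alternate (P3) and (P4). Because the objective is concave in $\mathbf{p}$ and convex in $\mathbf{w}$ over compact convex sets, Sion's minimax theorem guarantees that a saddle point exists. Strong convexity and strong concavity make that saddle point unique.

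The main obstacle is rigorous convergence of the alternating scheme, since alternating best responses in a max-min problem need not converge in general. The cleaner route I would try first uses the closed form of (P3). Substituting $w_k^*(p_k)$ back gives $\min_{\mathbf{w}}\sum_k R_k=\sum_k C_{k,ap}(p_k)$, which is the concave approximation of the ergodic capacity from \cite{Feng}. The outer problem is therefore a concave maximization, and the alternating scheme can be viewed as block updates on the saddle function. Each (P4) step does not decrease the objective, and each (P3) step does not increase it. I would then use monotone boundedness of the objective together with uniqueness of the saddle point to argue convergence.

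A secondary subtlety is the OCE constraint (\ref{P2b}), which couples $w_k$ and $p_k$. I will handle it by observing that at $w_k^*$ it is exactly the condition $C_{k,ap}\le E_k/(B_kT)$, so it can be imposed inside (P4) alone.
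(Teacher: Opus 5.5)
Your Steps 1 and 2 are sound and reproduce the paper's curvature computations. Your $f''$ is the paper's $\partial^2R_k/\partial w_k^2$ up to the factor $B_kT/\ln 2$, and your $w_k^*$ is the paper's closed form. Your labels are swapped, though: in the paper (P3) is the $\mathbf{p}$-subproblem and (P4) the $\mathbf{w}$-subproblem. The genuine gap is Step 3, because the monotonicity you rely on does not exist. The $\mathbf{p}$-step raises $R$ and the $\mathbf{w}$-step lowers it, so the values zig-zag, $R(\mathbf{w}^s,\mathbf{p}^s)\le R(\mathbf{w}^s,\mathbf{p}^{s+1})\ge R(\mathbf{w}^{s+1},\mathbf{p}^{s+1})$. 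Neither the lower bounds $\phi(\mathbf{p}^s)=\min_{\mathbf{w}}R(\mathbf{w},\mathbf{p}^s)$ nor the upper bounds $\max_{\mathbf{p}}R(\mathbf{w}^s,\mathbf{p})$ need be monotone. Concretely, with $\mathbf{w}^s=G(\mathbf{p}^s)$, the function $R(\mathbf{w}^s,\cdot)$ \emph{majorizes} $\phi$ and touches it at $\mathbf{p}^s$. Maximizing it is therefore not an ascent step for $\phi$; a minorizer would be needed. Boundedness plus uniqueness of the saddle point cannot repair this, since it is exactly the failure mode of alternating best responses that you flagged yourself.

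The missing ingredient is what the paper's proof adds to the two diagonal second derivatives: a uniform bound on the mixed partial, $|\partial^2R_k/\partial w_k\partial p_k|=\frac{B_kTl_k^2e^{w_k}}{\ln 2\,(\sigma e^{w_k}+l_k^2p_k/\sigma)^2}\le\frac{B_kTl_k^2}{\sigma^2\ln 2}$ for $w_k\ge 0$. Combined with the uniform moduli $\lambda_{\mathbf{w},\mathbf{p}},\lambda_{\mathbf{p},\mathbf{w}}>0$, this makes $R$ uniformly strongly concave--convex with bounded cross-Hessian. That is the hypothesis under which PDSD, with damped steps $\alpha=\beta=\min\{1,1/(2\tilde{\sigma})\}$ plus the backward-feedback selection, converges globally and linearly. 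Undamped alternation $\mathbf{p}^{s+1}=F(G(\mathbf{p}^s))$ has no such guarantee in general; a contraction argument would need roughly $\tilde{\sigma}=M_{\mathbf{w},\mathbf{p}}^2/(\lambda_{\mathbf{w},\mathbf{p}}\lambda_{\mathbf{p},\mathbf{w}})<1$, and nothing in your argument controls $\tilde{\sigma}$.

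Alternatively, complete your own ``cleaner route'' and drop the alternation altogether. Since $G$ is explicit, $\phi(\mathbf{p})=\sum_kB_kTC_{k,ap}(p_k)$ is concave and increasing. So (P2) is a single concave maximization over $\{0\le p_k\le\bar p_k,\ \sum_kp_k\le P_{\text{max}}\}$, solvable by bisection on the power multiplier. This requires defining $\bar p_k$ by $B_kTC_{k,ap}(\bar p_k)=E_k$, not by $R_k(w_k,\bar p_k)=E_k$ at a fixed $w_k$.

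Finally, Lemma~\ref{yinli} only gives finiteness of each $w_k$, not a uniform bound. For uniform moduli, use the explicit bound $w_k+e^{-w_k}-1\le E_k\ln 2/(B_kT)$ implied by the OCE constraint, or $p_k\le P_{\text{max}}$ in the closed form of $w_k^*$.
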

	\begin{subequations}\label{P3}
		\begin{align}
			\text{(P3)}\quad&\max_{\mathbf{p}} \sum_{k=1}^{K}R_k(p_k) \label{P3a} \\
			\textit{s.t.}\quad &(\ref{P1e}),(\ref{P1h}),(\ref{P2b}). \notag
		\end{align}
	\end{subequations}
	\begin{subequations}\label{P4}
		\begin{align}
			\text{(P4)}\quad&\min_{\mathbf{w}} \sum_{k=1}^{K}R_k(w_k) \label{P4a} \\
			\textit{s.t.}\quad &(\ref{P1i}). \notag
		\end{align}
	\end{subequations}
	(P3) is a convex optimization subproblem in $\mathbf{p}$ with $\mathbf{w}$ fixed, where $R_k(p_k)\triangleq R_k(w_k^{\text{fixed}},p_k)$. Similarly, (P4) is a convex optimization subproblem in $\mathbf{w}$ with $\mathbf{p}$ fixed. In (P4), the OCE constraint (11b) is omitted. Once $\mathbf{p}$ is fixed, (P4) can be decoupled into $K$ independent subproblems, each minimizing $R_k$. Consequently, during the alternating solution process, if (P3) satisfies the OCE constraint, (P4) inherently satisfies it as well, so we can omit (11b) in (P4).
	\begin{proof}
		Given that the $K$ subchannels are orthogonal and independent, the Hessian matrix is diagonal and its diagonal entries share the same structure. Therefore, we investigate the properties of $R_k(w_k,p_k)$. By computing the second derivative of the function and applying \textbf{Lemma \ref{yinli}}, we can obtain:
		\begin{equation}\label{mp}
			\frac{\partial^2R_k}{\partial{p_k}^2}=\frac{-B_kTl_k^4 e^{-2w_k}\log_2e}{\sigma^4(1+p_k\frac{l_k^2}{\sigma^2}e^{-w_k})^2}<0
		\end{equation}
		\begin{equation}\label{mw}
			\frac{\partial^2R_k}{\partial{w_k}^2}=\frac{B_kTl_k^2p_ke^{w_k}}{\ln2\left(\sigma e^{w_k}+\frac{l_k^2p_k}{\sigma}\right)^2}+B_kTe^{-w_k}\log_2e>0
		\end{equation}
		\begin{equation}\label{lwp}
			\left\lvert\frac{\partial^2R_k}{\partial w_k\partial p_k}\right\rvert=\frac{B_kTl_k^2e^{w_k}}{\ln2\left(\sigma e^{w_k}+\frac{l_k^2p_k}{\sigma}\right)^2}<\infty
		\end{equation}
		Therefore, $R_k(w_k,p_k)$ is uniformly strongly concave-convex and can be decoupled into two convex optimization subproblems, (P3) and (P4), which can be solved using an iterative algorithm, such as the primal-dual steepest descent (PDSD) algorithm with global linear convergence\cite{PDSD}.
	\end{proof}
	
	\begin{proposition}
		The optimal solution to (P3) is as follows:
		\begin{equation}\label{zuiyoujie1}
			p_k^* =
			\begin{cases}
				p_k^{\text{upper}},\qquad\qquad\qquad{\text{if}}\quad\sum_{k=1}^{K}p_k^{\text{upper}}\leq P_{\text{max}},\\
				{\min \left\{ p_k^{\text{upper}},\max \left\{0,\frac{B_kT}{\lambda}-\frac{\sigma^2}{l_k^2}\right\}\right\},}{\text{otherwise.}}
			\end{cases}
		\end{equation}
		where 
		\begin{equation}\label{pupper}
			p_k^{\text{upper}}\triangleq\frac{\sigma^2e^{w_k}}{l_k^2}\left(e^{\frac{E_k\ln 2}{B_kT}-w_k-e^{-w_k}+1}-1\right),
		\end{equation}
		and $\lambda$ is the unique constraint satisfying $\sum_{k=1}^{K}p_k^*=P_{\text{max}}$.
	\end{proposition}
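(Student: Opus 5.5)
With $\mathbf{w}$ fixed, I will solve (P3) through its KKT conditions. It is convex by the preceding proposition, and Slater's condition holds at $\mathbf{p}=\mathbf{0}$: there $R_k(w_k,0)=\frac{B_kT}{\ln 2}(w_k+e^{-w_k}-1)$, which we assume is strictly below $E_k$, as it must be for (P3) to be feasible. So the KKT conditions are necessary and sufficient.

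\textbf{Step 1: convert the OCE constraint into a power bound.} For fixed $w_k$, $R_k(w_k,p_k)$ is strictly increasing in $p_k$. Hence (\ref{P2b}) is equivalent to $p_k\le p_k^{\text{upper}}$, where $p_k^{\text{upper}}$ solves $R_k(w_k,p_k)=E_k$. Solving
\[
\ln\Bigl(1+\tfrac{p_kl_k^2}{e^{w_k}\sigma^2}\Bigr)=\tfrac{E_k\ln2}{B_kT}-w_k-e^{-w_k}+1
\]
gives exactly (\ref{pupper}). Problem (P3) thus becomes maximizing a separable concave sum over the box $0\le p_k\le p_k^{\text{upper}}$ with the budget $\sum_k p_k\le P_{\max}$.

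\textbf{Step 2: the non-binding budget case.} If $\sum_k p_k^{\text{upper}}\le P_{\max}$, then $p_k=p_k^{\text{upper}}$ is feasible. It is also optimal, because each term is increasing in $p_k$ and is maximized on its own box at the upper endpoint. This gives the first case of (\ref{zuiyoujie1}).

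\textbf{Step 3: the binding budget case.} Otherwise I form the Lagrangian with multiplier $\lambda\ge0$ for (\ref{P1e}) and multipliers for the box constraints. Stationarity gives
\[
\frac{B_kT}{\ln2}\,\frac{l_k^2/(e^{w_k}\sigma^2)}{1+p_kl_k^2/(e^{w_k}\sigma^2)}=\lambda
\]
for an interior $p_k$. Absorbing the constants $\ln 2$ and $e^{w_k}$ into $\lambda$ and into the effective noise term yields a water-filling level of the form $B_kT/\lambda-\sigma^2/l_k^2$. The box constraints then clip this level to $[0,p_k^{\text{upper}}]$, which is the $\min\{\cdot,\max\{0,\cdot\}\}$ form in (\ref{zuiyoujie1}). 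Since the budget is infeasible for the uncapped maximizers, $\lambda>0$, so by complementary slackness the budget is tight, $\sum_k p_k^*=P_{\max}$.

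\textbf{Step 4: existence and uniqueness of $\lambda$.} The clipped level is continuous and nonincreasing in $\lambda$. It is strictly decreasing wherever at least one $p_k$ is neither $0$ nor $p_k^{\text{upper}}$. As $\lambda\to0^+$ the sum tends to $\sum_k p_k^{\text{upper}}>P_{\max}$, and as $\lambda\to\infty$ it tends to $0$. The intermediate value theorem then gives a unique $\lambda$, which can be found by bisection.

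\textbf{Main obstacle.} The delicate point is matching the stationarity condition to the paper's displayed expression. The exact level is $\frac{B_kT}{\lambda\ln2}-\frac{e^{w_k}\sigma^2}{l_k^2}$, so I would absorb $\ln2$ into $\lambda$. I would also state explicitly that the $e^{w_k}$ factor is part of the effective noise, or is suppressed in the notation, rather than claim literal equality with (\ref{zuiyoujie1}).
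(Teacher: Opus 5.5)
Your proposal is correct and follows the paper's own argument: rewrite the OCE constraint as $p_k\le p_k^{\text{upper}}$, take the caps when $\sum_k p_k^{\text{upper}}\le P_{\max}$, and otherwise apply the KKT conditions to get a capped water-filling solution with a tight budget. Your Step~4 adds the existence argument for $\lambda$ that the paper leaves implicit, but its uniqueness claim has a small hole: if $P_{\max}$ equals a partial sum of the caps, the budget sum is flat in $\lambda$, so $\lambda$ need not be unique (the paper overstates this too), although $\mathbf{p}^*$ is always unique by strict concavity. Your caveat about the level is also right. Stationarity gives $\frac{B_kT}{\lambda\ln 2}-\frac{e^{w_k}\sigma^2}{l_k^2}$, and the $\ln 2$ can be absorbed into $\lambda$, but the $k$-dependent factor $e^{w_k}$ cannot, so the paper's displayed expression is correct only if its $\sigma^2$ is read as the effective noise $e^{w_k}\sigma^2$.
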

	\begin{proof}
		We can transform (\ref{P2b}) into $p_k\leq p_k^{\text{upper}}$ with $w_k$ fixed. When $\sum_{k=1}^{K}p_k^{\text{upper}}\leq P_{\text{max}}$ is satisfied, each subchannel is shown to reach the OCE limit, and the power allocation attains its upper bound $p_k^{\text{upper}}$, such that increasing power yields no further performance improvement. The optimal solution is then given by $p_k^*=p_k^{\text{upper}}$.
		
		When $\sum_{k=1}^{K}p_k^{\text{upper}}>P_{\text{max}}$, indicating insufficient total power, it is easy to find the optimal solution necessitates the full utilization of the transmit power, so $\sum_{k=1}^{K}p_k=P_{\text{max}}$. This standard convex optimization problem can be solved by constructing the Lagrangian dual problem and applying the Karush-Kuhn-Tucker (KKT) conditions\cite{convex}, yielding (\ref{zuiyoujie1}).
	\end{proof}
	
	\begin{proposition}
		The optimal solution to (P4) is as follows:
		\begin{equation}\label{zuiyoujie2}
			w_k^*=\ln \left(\frac{1+\sqrt{1+\frac{4p_kl_k^2}{\sigma^2}}}{2}\right).
		\end{equation}
	\end{proposition}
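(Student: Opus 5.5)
Since $\mathbf{p}$ is fixed in (P4) and the objective $\sum_{k=1}^{K}R_k(w_k)$ is separable with only the box constraints $w_k\geq 0$ from (\ref{P1i}), the problem splits into $K$ independent scalar problems $\min_{w_k\geq 0} R_k(w_k,p_k)$. By the previous proposition each $R_k(\cdot,p_k)$ is strictly convex in $w_k$, as (\ref{mw}) is positive. Lemma \ref{yinli} guarantees that the minimizer is finite. So it suffices to find the unique stationary point and check that it satisfies $w_k\geq 0$. If it does, it is the global minimizer on $[0,\infty)$ and no KKT multiplier is needed.

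To compute the stationary point, I will drop the positive factor $B_kT/\ln 2$ in (\ref{Rk}) and write $a_k\triangleq p_kl_k^2/\sigma^2\geq 0$, so that the function to minimize is $f(w)=\ln(1+a_ke^{-w})+w+e^{-w}-1$. Differentiating gives
\begin{equation*}
f'(w)=-\frac{a_ke^{-w}}{1+a_ke^{-w}}+1-e^{-w}=\frac{1}{1+a_ke^{-w}}-e^{-w}.
\end{equation*}
Setting $f'(w)=0$ yields $e^{w}=1+a_ke^{-w}$. With the substitution $x=e^{w}$ this becomes the quadratic $x^2-x-a_k=0$.

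The quadratic has roots $x=\bigl(1\pm\sqrt{1+4a_k}\bigr)/2$. Since $x=e^{w}>0$, I select the plus sign, and taking the logarithm gives (\ref{zuiyoujie2}). Feasibility follows because $a_k\geq 0$ implies $\sqrt{1+4a_k}\geq 1$, hence $x\geq 1$ and $w_k^*\geq 0$. The boundary case $p_k=0$ gives $w_k^*=0$, which is consistent. Strict convexity then makes this stationary point the unique global minimizer of $f$ over $w\geq 0$.

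The only delicate step is justifying that the OCE constraint (\ref{P2b}) may be omitted in (P4), which the paper argues just before the earlier proof. I will restate that argument here: minimizing over $w_k$ can only decrease $R_k$, so a point that is feasible for (P3) stays feasible after the (P4) update. The remaining work is the algebra above, which is routine.
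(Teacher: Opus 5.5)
Your proof is correct and follows the same route as the paper: decouple (P4) into $K$ scalar strictly convex problems, solve the first-order (KKT) condition with the multiplier for $w_k\geq 0$ inactive, and observe that the OCE constraint can be dropped because the unconstrained minimizer gives the smallest possible $R_k$. Your explicit derivative, the quadratic in $e^{w}$, the root selection and the check that $w_k^*\geq 0$ fill in the algebra the paper leaves implicit; the appeal to Lemma 1 is unnecessary, since the explicit stationary point is already finite.
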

	\begin{proof}
		Since the variables $w_k$ are mutually independent, (P4) can be decoupled into $K$ single-variable convex optimization subproblems. Neglecting constraint (\ref{P2b}) and solving Lagrangian dual problem with the KKT conditions yields solution (\ref{zuiyoujie2}). Substituting (\ref{zuiyoujie2}) into (\ref{P2b}): if satisfied, (\ref{zuiyoujie2}) solves (P4); otherwise, the choice of $p_k$ renders (P4) infeasible. When solved iteratively, the iterations remain within the feasible region, guaranteeing that (P4) has a solution.
	\end{proof}
	
	We introduce the following notation. For the fixed $\mathbf{w}$, let $F(\mathbf{w})$ denote the optimal solution $\mathbf{p}^*$ of (P3) given by (\ref{zuiyoujie1}) and $f(\mathbf{w})$ the corresponding optimal value. For the fixed $\mathbf{p}$, let $G(\mathbf{p})$ denote the optimal solution $\mathbf{w}^*$ of (P4) given by (\ref{zuiyoujie2}) and $g(\mathbf{p})$ its optimal value. We adopt fixed step lengths for the iteration\cite{PDSD}. Let $R(\mathbf{w},\mathbf{p})=\sum_{k=1}^{K}R_k(w_k,p_k)$. We then set:
	\begin{equation}\label{steplength1}
		\lambda_{\mathbf{w},\mathbf{p}}=\inf \{ \text{smallest eigenvalue of } \nabla_{\mathbf{w}\mathbf{w}}^2 R(\mathbf{w},\mathbf{p}) \}
	\end{equation}
	\begin{equation}\label{steplength2}
		\lambda_{\mathbf{p},\mathbf{w}}=\inf \{ \text{smallest eigenvalue of } -\nabla_{\mathbf{p}\mathbf{p}}^2 R(\mathbf{w},\mathbf{p}) \}
	\end{equation}
	\begin{equation}\label{steplength3}
		M_{\mathbf{w},\mathbf{p}} = \sup \left\{ \left\| \nabla_{\mathbf{w}\mathbf{p}}^2 R(\mathbf{w}, \mathbf{p}) \right\| \right\}
	\end{equation}
	\begin{equation}\label{steplength4}
		\tilde{\sigma} = \frac{M_{\mathbf{w},\mathbf{p}}^2}{\lambda_{\mathbf{w},\mathbf{p}} \lambda_{\mathbf{p},\mathbf{w}}},
	\end{equation}
	 where $\mathbf{w}$ and $\mathbf{p}$ lie in the feasible region. We summarize the proposed algorithm in \textbf{Algorithm \ref{suanfa}}.
	
	\begin{algorithm}[t]
		\caption{Iterative Algorithm for Power Allocation}
		\label{suanfa}
		\textbf{Input:} $\textbf{SC}^3$ loop-related parameters: $E_k$, $T$ and $K$;\\
		\hspace*{1.5em} Control-related parameters: $n$, $m$, $\mathbf{A}$, $\mathbf{B}$, $\mathbf{Q}$, $\mathbf{R}$ and $\mathbf{\Sigma}_{\mathbf{v}}$; \\
		\hspace*{1.5em} Communication-related parameters: $l_k$, $\sigma^2$, $B_k$ and $P_{\text{max}}$; \\
		\hspace*{1.5em} The iteration terminating threshold $\delta$;\\
		\vspace{-1.5em}
		\begin{algorithmic}[1]
			\STATE Calculate $\mathbf{S}$ and $\mathbf{M}$ according to (\ref{fangcheng});
			\STATE Calculate $\tilde{\sigma}$ according to (\ref{steplength1})-(\ref{steplength4});
			\STATE \textbf{Initialization:}
			\STATE set iteration counter $s=0$, step lengths $\alpha = \beta = \min \left\{ 1, \frac{1}{2\tilde{\sigma}} \right\}$, $w_k^0=0$, and $p_k^0=F(w_k^0)$;
			\REPEAT
			\STATE Generate intermediate points:\\
			$\hat{\mathbf{w}}^{s+1}= (1-\alpha)\mathbf{w}^{s} + \alpha G(F(\mathbf{w}^{s}))$, \\
			$\hat{\mathbf{p}}^{s+1}= (1-\beta)\mathbf{p}^{s} + \beta F(G(\mathbf{p}^{s}))$;
			\STATE Update with backward feedback:\\ $\mathbf{w}^{s+1}=\text{argmin}\left\{f(\mathbf{w})|\mathbf{w}=\hat{\mathbf{w}}^{s+1}\ \text{or}\ \mathbf{w}=G(\mathbf{p}^{s})\right\}$,\\
			$\mathbf{p}^{s+1}=\text{argmax}\left\{g(\mathbf{p})|\mathbf{p}=\hat{\mathbf{p}}^{s+1}\ \text{or}\ \mathbf{p}=F(\mathbf{w}^{s})\right\}$;
			\STATE $s = s + 1$;
			\UNTIL \\
			{$\displaystyle\left\lvert \min\{f(\mathbf{w}^{s}),f(G(\mathbf{p}^{s}))\}-\max\{g(\mathbf{p}^{s}),g(F(\mathbf{w}^{s}))\} \right\rvert \leq \delta$}
			\STATE Calculate the optimal $R_k^*$ according to (\ref{Rk}) and calculate the optimal total data rate $R^*=\sum_{k=1}^{K}R_k^*$;
			\STATE Judge the stability condition (\ref{neibing}) and calculate the limit LQR cost, $l^*$, according to (\ref{xiajie});
		\end{algorithmic}
		\textbf{Output:} The optimal LQR cost, mission-related data rate, allocated power, auxiliary variables: $l^*$, $R^*$, $\mathbf{p}^*$, $\mathbf{w}^*$.
	\end{algorithm}
	\begin{remark}
		The initial point must be chosen within the feasible region, as guaranteed by the proposed algorithm; otherwise, the iteration may not proceed.
	\end{remark}
	
	\section{Simulation Results and Discussion}
	In this section, we present simulation results to evaluate the performance of the proposed scheme. Unless otherwise specified, the following simulation parameters are used. The number of operational LAAs is $K=5$, which are randomly distributed within a circular area with a radius of $5$km and an altitude of $1$km. The EIH is located at the center of the circle on the ground. Referring to \cite{para}, in high-rise fire scenarios, smoke and high temperature significantly increase wireless signal attenuation. Accordingly, we set $B_k=5$kHz, $F_k=8$dB, for $k=1,2,\cdots,K$, $\gamma=4$, $\sigma^2=-110$dBm. $T$ is set to $49.8$ms for all LAAs, considering the satellite latency\cite{Lei}. $E_k$ is randomly generated from the interval $[0,1500]$(bits/cycle). Control-related parameters are given by $n=m=1000$, $\log_2 |\det \mathbf{A}|=2000$, $\mathbf{R}=\mathbf{0}_{1000}$, $\mathbf{Q}=\mathbf{I}_{1000}$. The control noise is assumed to be independent Gaussian random variables with mean zero and variance $0.01$. The iteration terminating threshold is $\delta=10^{-6}$. 
	
	\begin{figure} [t]
		\centering
		\includegraphics[width=0.95\linewidth]{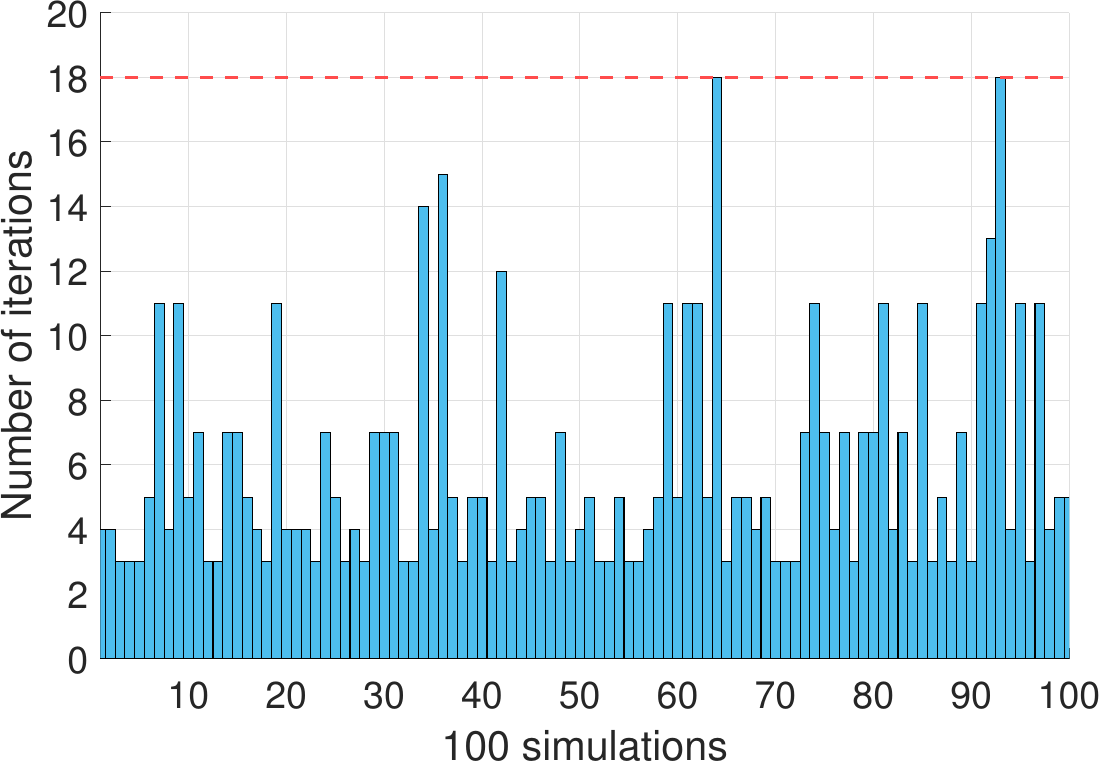}
		\caption{The number of iterations of the proposed iterative algorithm under $100$ random simulations.}
		\label{fig:sim1}
	\end{figure}
	
	Fig. \ref{fig:sim1} demonstrates the convergence performance of the proposed iterative algorithm. In the simulation, the number of operational LAAs, and the maximum transmit power are randomly selected from $[5,20]$ and $[6,24]$(dBW), respectively. The results show that the proposed algorithm converges within $18$ iterations across $100$ independent simulations, validating its low complexity and ability for quick convergence.
	
	\begin{figure} [t]
		\centering
		\includegraphics[width=0.98\linewidth]{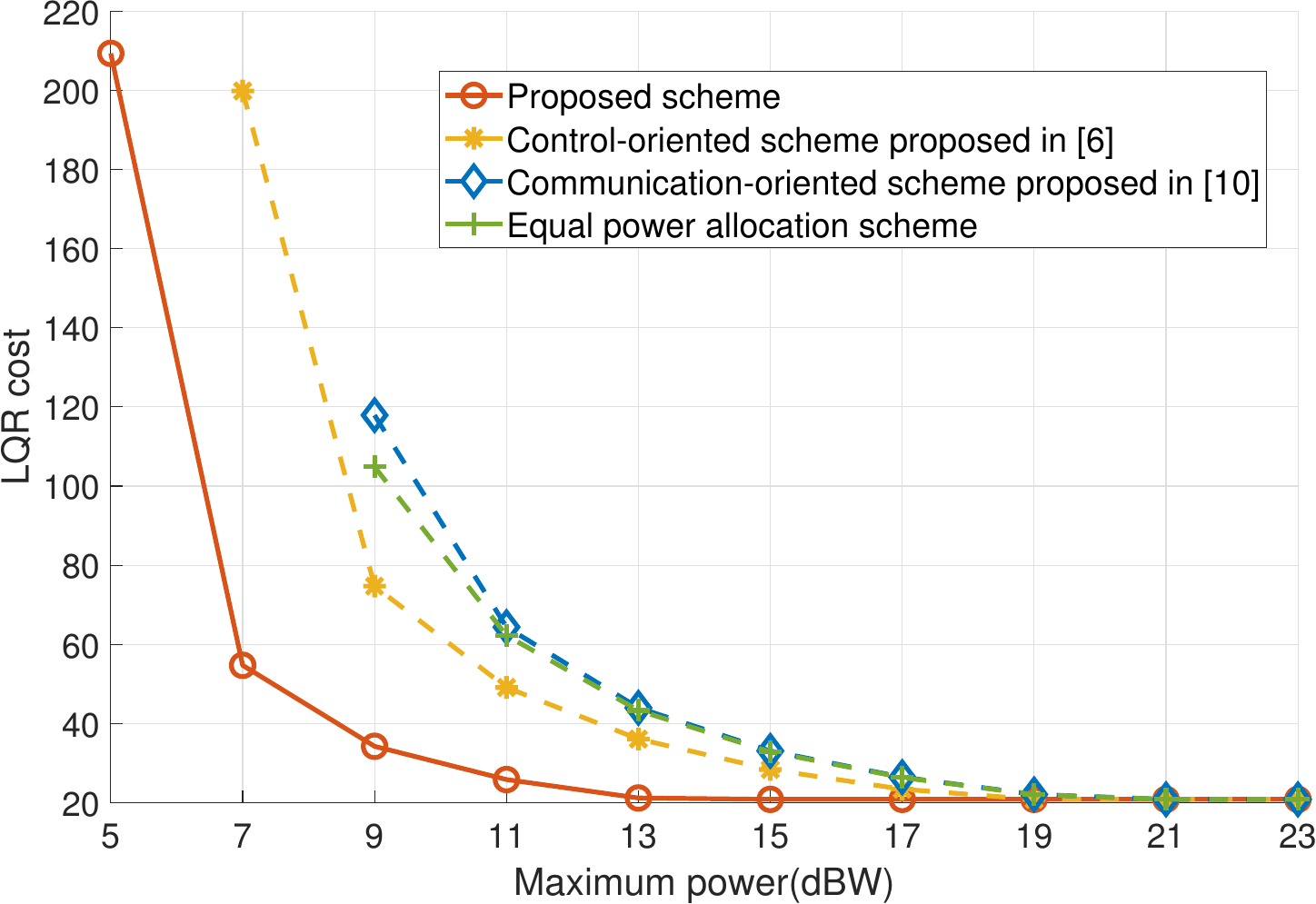}
		\caption{LQR cost achieved by different power allocation methods.}
		\label{fig:sim2}
	\end{figure}
	Fig. \ref{fig:sim2} compares the LQR cost achieved by the proposed scheme, the control-oriented scheme proposed in~\cite{Lei}, the communication-oriented scheme proposed in \cite{Hu} and the equal power allocation scheme. It shows that the proposed scheme achieves the lowest LQR cost for any given $P_{\text{max}}$. Particularly when $P_{\text{max}}<7$dBW, the conventional schemes lead to system instability, causing the cost to approach infinity. We observe that the control-oriented scheme proposed in \cite{Lei} achieves better performance in $\textbf{SC}^3$ closed loop control compared to the other two schemes. However, since it does not consider the OCE constraints, it suffers performance degradation relative to our scheme. The scheme in \cite{Hu}, which focuses solely on communication while neglecting closed loop and OCE, has a slightly higher LQR cost compared to the equal power allocation scheme, indicating that over-reliance on channel conditions may degrade control performance. Moreover, the LQR cost decreases as total power increases, demonstrating the benefits of enhancing downlink communication capability. However, as the total power becomes sufficiently large, the reduction in LQR cost gradually diminishes and the performance of different schemes converges. This is because all systems approach the performance upper bound imposed by OCE, beyond which merely increasing transmit power cannot overcome the bottleneck. This highlights the significance of the proposed OCE concept in practical scenarios and the superiority of the proposed scheme under resource-constrained conditions.
	
	\begin{figure} [t]
		\centering
		\includegraphics[width=1.0\linewidth]{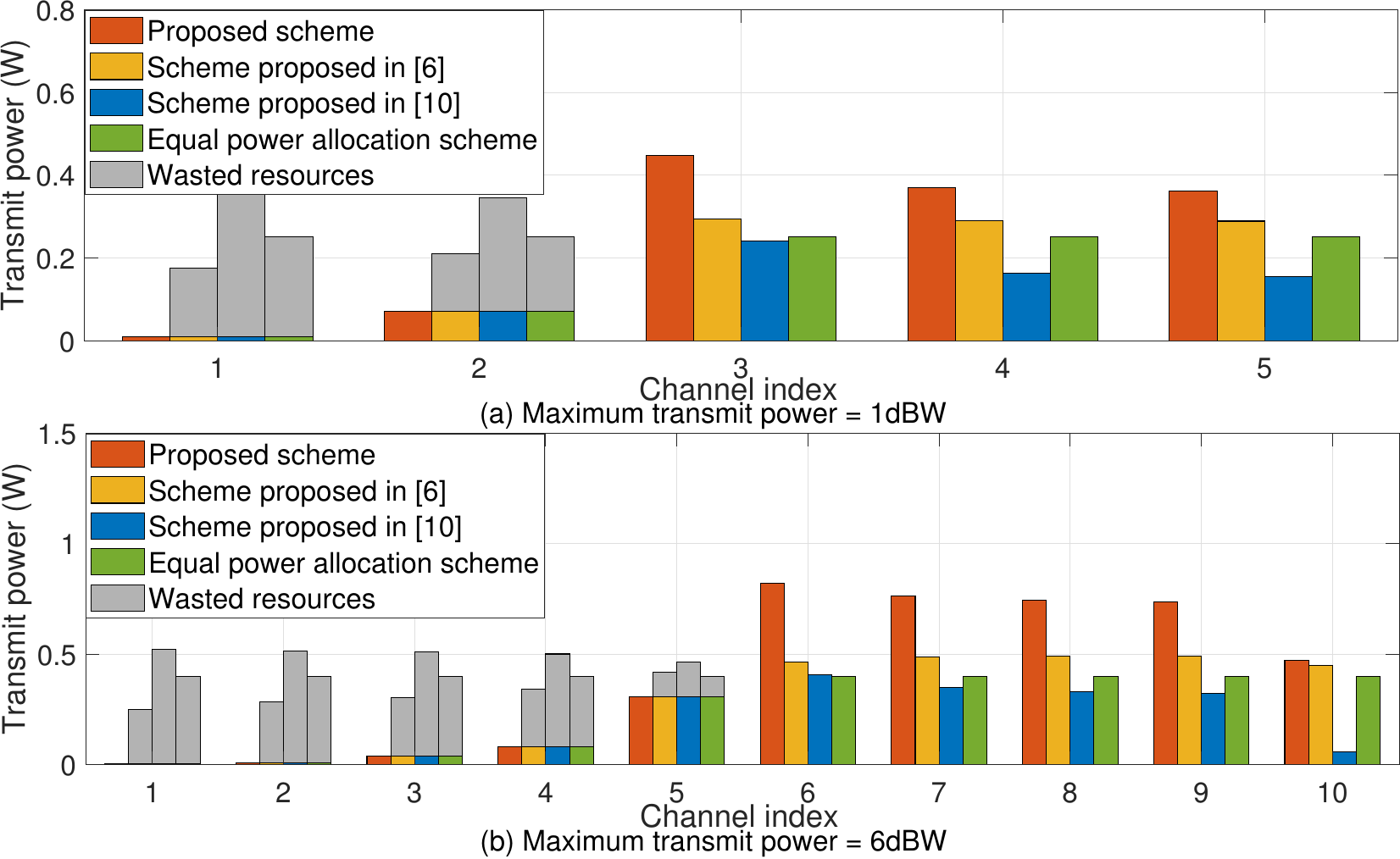}
		\caption{Power allocated to channels by different methods.}
		\label{fig:sim3}
	\end{figure}
	
	Fig. \ref{fig:sim3}(a) and \ref{fig:sim3}(b) compare the power allocated to each channel obtained with different methods. The channels are sorted such that $l_1\geq l_2 \geq \cdots \geq l_K$ and $E_1\leq E_2 \leq \cdots \leq E_K$. The results clearly illustrate the differences among the four methods. The proposed scheme balances OCE and channel gain, thereby maximizing resource utilization. Although the scheme in \cite{Lei} is control-oriented, it may allocate excessive power to LAAs with favorable channels but low OCE, while LAAs with stronger work capability but less favorable channels receive insufficient resources. The other two schemes similarly suffer from resource waste on channels with low OCE. In fact, based on (\ref{zuiyoujie1}), we can also know that the proposed scheme introduces OCE constraints to the water-filling framework, effectively avoiding inefficient allocation of communication resources.
	
	\begin{figure} [t]
		\centering
		\includegraphics[width=1.0\linewidth]{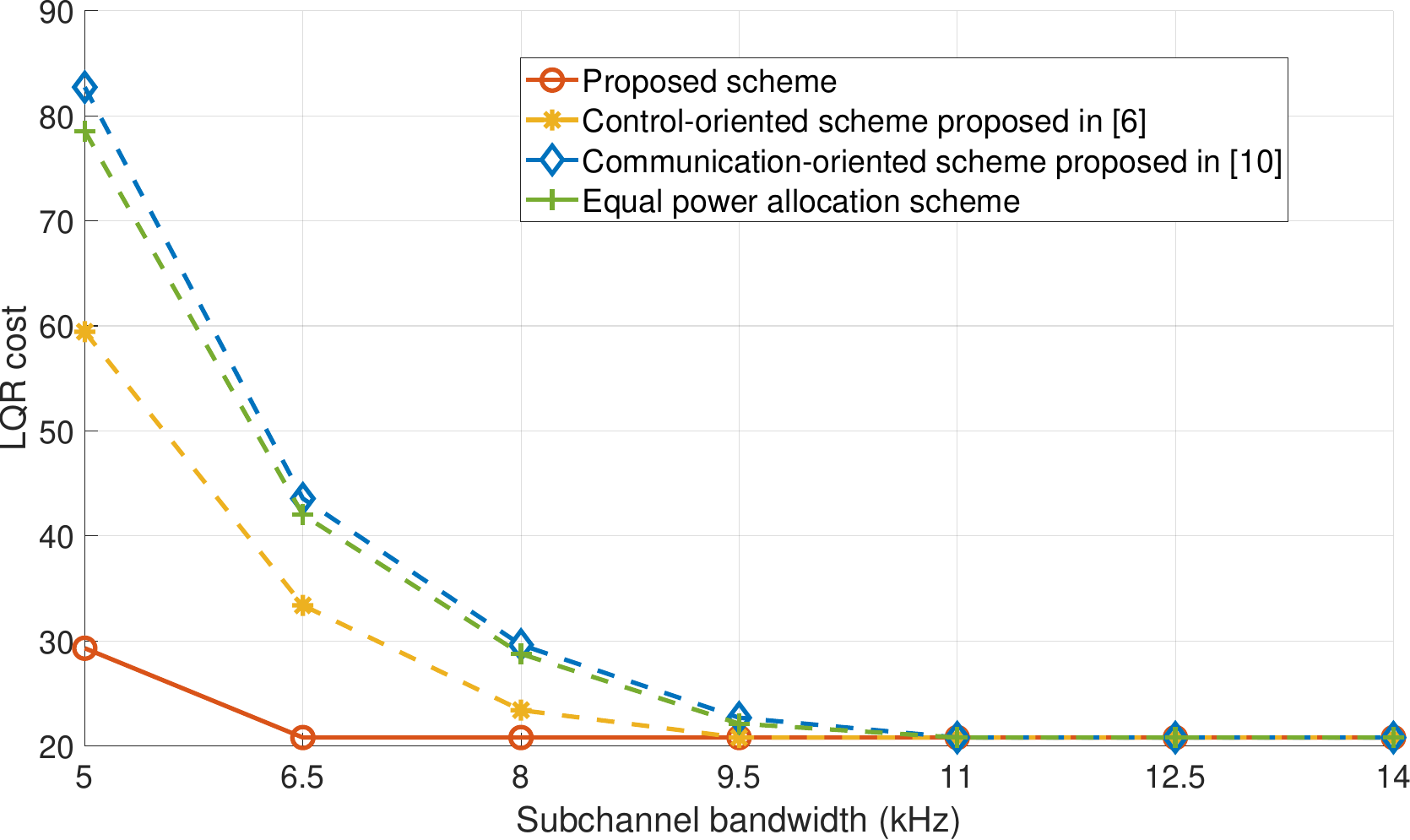}
		\caption{LQR cost under different subchannel bandwidths.}
		\label{fig:sim4}
	\end{figure}
	
	Fig. \ref{fig:sim4} investigates the impact of subchannel bandwidth $B_k$ on the LQR cost under $P_{\text{max}}=10$dBW. The results show that the proposed scheme achieves the lowest LQR cost for any given $B_k$, and its advantage becomes more pronounced when bandwidth is limited. As the $B_k$ increases, the LQR cost of all schemes decreases, indicating that, besides increasing transmit power, appropriately increasing the subchannel bandwidth is also an effective way to improve control performance. With the increase of $B_k$, the LQR cost of all schemes gradually converges, implying that the system bottleneck shifts from bandwidth to OCE. This is consistent with the convergence observed in Fig. \ref{fig:sim2}, jointly validating that OCE constitutes the ultimate upper bound on system performance. The proposed scheme is robust in both power and bandwidth dimensions. How to jointly allocate power and bandwidth to optimize $\textbf{SC}^3$ closed loop control performance remains an open and promising direction for future work.
	
	\section{Conclusions}
	In this letter, we have investigated a multi-LAA cooperative $\textbf{SC}^3$ closed loop control system in the low-altitude economy. We defined the OCE to quantify the work capability of LAAs during mission execution, and proposed a mission-oriented adaptive power allocation scheme that minimizes the LQR cost to improve mission efficiency. The original non-convex problem was decoupled into two convex subproblems, for which closed-form solutions were derived and an iterative algorithm was developed. Simulation results demonstrated that the proposed scheme effectively exploits limited resources by jointly accounting for OCE and channel heterogeneity, significantly outperforming traditional approaches and validating the value of using OCE in closed loop system design.

	%{\appendices
		%\section*{Proof of the First Zonklar Equation}
		%Appendix one text goes here.
		% You can choose not to have a title for an appendix if you want by leaving the argument blank
		%\section*{Proof of the Second Zonklar Equation}
		%Appendix two text goes here.}

	%	\begin{thebibliography}{1}
		%		\bibliographystyle{IEEEtran}

		\newpage
		
		%\section{Biography Section}
		%If you have an EPS/PDF photo (graphicx package needed), extra braces are needed around the contents of the optional argument to biography to prevent the LaTeX parser from getting confused when it sees the complicated
		%$\backslash${\tt{includegraphics}} command within an optional argument. (You can create your own custom macro containing the $\backslash${\tt{includegraphics}} command to make things simpler here.)
		
		%\vspace{11pt}
		
		%\bf{If you include a photo:}\vspace{-33pt}
		%\begin{IEEEbiography}[{\includegraphics[width=1in,height=1.25in,clip,keepaspectratio]{fig1}}]{Michael Shell}
		%Use $\backslash${\tt{begin\{IEEEbiography\}}} and then for the 1st argument use $\backslash${\tt{includegraphics}} to declare and link the author photo.
		%Use the author name as the 3rd argument followed by the biography text.
		%\end{IEEEbiography}
		
		\vspace{11pt}
		
		%\bf{If you will not include a photo:}\vspace{-33pt}
		%\begin{IEEEbiographynophoto}{John Doe}
		%Use $\backslash${\tt{begin\{IEEEbiographynophoto\}}} and the author name as the argument followed by the biography text.
		%\end{IEEEbiographynophoto}

		\vfill
	\end{document}